\documentclass[conference]{IEEEtran}
\usepackage{amsfonts}
\usepackage{amssymb}
\usepackage{amsmath}
\usepackage{mathrsfs}
\usepackage{graphicx}
\usepackage{verbatim}
\usepackage{subfigure}
\usepackage{balance}
\usepackage{booktabs}
\usepackage{fancybox}
\usepackage{bm}
\usepackage{extarrows}
\usepackage{algorithm}
\usepackage{algorithmic}
\usepackage{multirow}
\usepackage{array}
\usepackage{epstopdf}
\usepackage{caption}
\usepackage{longtable}
\usepackage{rotating}
\usepackage{multirow}
\usepackage{diagbox}
\captionsetup{font={footnotesize}}

\newtheorem{proof}{Proof}
\newtheorem{proposition}{Proposition}

\def \F {{}_2F_1}

\textwidth 6.9in \textheight 9.0in \topmargin -0.3in \oddsidemargin -0.15in \evensidemargin -0.15in \setlength{\columnsep}{0.23in}

\begin{document}
\title{Cooperative Transmission for Physical Layer Security by Exploring Social Awareness}
\author{
	
	\IEEEauthorblockN{Yiming Xu, Hui-Ming Wang, Qian Yang, Ke-Wen Huang, and Tong-Xing~Zheng}
	
	\IEEEauthorblockA{School of Electronic and Information Engineering, Xi'an Jiaotong University\\
		Ministry of Education Key Lab for Intelligent Networks and Network Security, Xi'an Jiaotong University\\
		Xi'an 710049, Shaanxi, P. R. China
	}

	\IEEEauthorblockA{
		Email: 469179938@qq.com
	}
}

\maketitle

\begin{abstract}
Social awareness and social ties are becoming increasingly fashionable with emerging mobile and handheld devices. Social trust degree describing the strength of the social ties has drawn lots of research interests in many fields including secure cooperative communications. Such trust degree reflects the users' willingness for cooperation, which impacts the selection of the cooperative users in the practical networks. In this paper, we propose a cooperative relay and jamming selection scheme to secure communication based on the social trust degree under a stochastic geometry framework. We aim to analyze the involved secrecy outage probability (SOP) of the system's performance. To achieve this target, we propose a double Gamma ratio (DGR) approach through Gamma approximation. Based on this, the SOP is tractably obtained in closed form. The simulation results verify our theoretical findings, and validate that the social trust degree has dramatic influences on the network's secrecy performance.
\end{abstract}

\IEEEpeerreviewmaketitle

\section{Introduction}
Nowadays, social ties have brought extensive influences among humankind. More and more people are actively involved in online social interactions \cite{SocialTie}, hence social ties among people are extensively broadened and significantly enhanced \cite{SocialTie4}. The so-called social ties are usually defined as the social relationships between individuals \cite{DefineSocialTie}, such as kinship, colleague relationships, friendship, acquaintance and so on \cite{SocialAware2}. The social trust degrees of the social ties among friends are the most basic and fundamental notions which characterize the strength of two individuals relating to each other \cite{SocialAware}. According to \cite{MeasureSocial}, ties have specific trust degree values describing the strength (i.e., from enmity to kinship) between the users. Moreover, social trust degree has drawn lots of research interests in various fields including mobile social networks, secure communications and so on.

Physical layer security (PLS) has drawn considerable attention during the past few years. Wyner's seminal research in \cite{Wyner} established a basic theory for the PLS. According to Wyner's theory, a positive secrecy capacity exists if the channel quality of the legitimate receiver is better than that of the eavesdropper. To protect the confidentiality of wireless transmissions, various communication technologies have been proposed, among which the user cooperation technology has been studied intensively.
As indicated by the survey paper \cite{BossTutarial}, various cooperative beamforming and jamming schemes have been proposed in \cite{JointInAF}--\cite{JointRobustAF}.
However, most of these existing works assume that the relays or the jammers have been chosen without considering the social trust degrees. Especially, whether each node should be chosen as relay or jammer according to the social trust degree has not been well studied.

The social ties of users reflect their willingness for sharing resources for safeguarding secrecy transmissions. In relay selection, the cooperative relay or jamming nodes should be selected according to their social trust degrees \cite{DefineSocialTie}. Recently, the social ties among users have been investigated in cooperative transmissions for PLS enhancement \cite{SecrecySocial1}-\cite{SecrecySocial4}.
Zheng \textit{et al.} \cite{SecrecySocial1} studied the secrecy rate and the secrecy throughput using average source-destination distance based on social ties.
To improve secrecy, Tang \textit{et al.} \cite{SecrecySocial2} discussed the secrecy outage probability (SOP) of a source-destination pair through the social tie based cooperative jamming game. A selection scheme based on mobility-impacted social interaction is proposed in \cite{SecrecySocial3} to maximize the worst-case secrecy rate in peer-to-peer (P2P) communications. An optimal cooperative transmission strategy is presented in \cite{SecrecySocial4} to maximize the secrecy rate, and the relays can be potential eavesdroppers according to their social trust degree. However, in these works, the nodes cooperate in either relay mode or jammer mode, which lacking the hybrid modes. The secrecy rate or the SOP have not been considered in terms of a stochastic geometry framework.

In this paper, we propose a cooperative relay and jamming scheme to secure wireless cooperation communications under a stochastic geometry framework. The nodes are classified into relays and jammers based on their locations and social trust degrees. We analyze the SOP to evaluate the system¡¯s security performance by applying a double Gamma ratio (DGR) approach. The DGR approach based on Gamma approximation facilitates mathematically tractable analysis and has a high accuracy.

Notations: $\mathbb{E}_A[\cdot]$ and $\mathbb{D}_A[\cdot]$ denote the mathematical expectation and variance with respect to a random variable $A$, respectively. $\mathcal{CN}(\mu, \sigma^2)$ denotes circularly symmetric complex Gaussian distribution with mean $\mu$ and variance $\sigma^2$. $\exp(1)$ denotes exponential distribution with mean 1. $\mathcal{A}(x,r)\subset \mathbb{R}^2$ denotes a bi-dimensional disk centered at $x$ with radius $r$. $\mathcal{D}(L_1,L_2)\subset \mathbb{R}^2$ denotes an annulus centered with internal radius $L_1$ and external radius $L_2$.

\section{Network Model and Problem Description}
As illustrated in Fig.~\ref{SCP_M_PHI}, we consider a wireless network over a finite circle area $\mathcal{A}(o,L_2)\subset \mathbb{R}^2$. This network consists of one source $s$, one destination $d$, one eavesdropper $z$, and a lot of legitimate nodes. Each node in the network works in a half-duplex mode and is equipped with a single antenna. The source hopes to transmit confidential signals to the destination without being wiretapped by the eavesdropper. Without loss of generality, we assume that the source $s$ is located at the origin $(0,0)$. The signal suffers from both small-scale fading and large-scale path loss. The channel between two nodes $x_1$ and $x_2$ is modeled as $H_{x_1,x_2} d_{x_1,x_2}^{-\alpha/2}$, where $H_{x_1,x_2}\thicksim\mathcal{CN}(0,1)$ is the quasi-static small-scale fading following the Rayleigh distribution, $d_{x_1,x_2}^{-\alpha/2}$ is the large-scale fading characterized by the standard path loss model $\|x_1 - x_2\|^{-\alpha/2}$ with $\alpha>2$ being the path-loss exponent \cite{Channel1}.

\subsection{Social Trust Degree Based Nodes Classification}
The locations of the legitimate nodes are modeled as a homogeneous poisson point process (PPP) $\Phi$ with density $\lambda$. We assume that the social trust degree of each legitimate node is independent and identically distributed (i.i.d.) and modeled as a uniform random variable $C$ distributed in $[0,1]$ as in \cite{MeasureSocial},\cite{SecrecySocial4}-\cite{RelaySocial3}. For a legitimate node, the trust degree is an increasing function of $C$, i.e., the source trusts the node more when $C$ increases.

We classify the legitimate nodes into relays, jammers and dummy nodes according to their locations as well as social trust degrees of the source.
The relays are the nodes whose trust degrees are in $[C_1,1]$ and are located within $\mathcal{A}(o,L_1)$. The trust degrees of the jammers are in $[C_2,C_1]$ and they located within $\mathcal{D}(L_1,L_2)$. Those nodes with trust degrees in $[0,C_2]$ are dummy nodes. According to the properties of the PPPs \cite{SGX}, the locations of the relays and the jammers are characterized by two independent and homogeneous PPPs denoted by $\Phi_R$ and $\Phi_J$ with densities $\lambda_R = (1 - C_1)\lambda$ and $\lambda_J = (C_1 - C_2)\lambda$, respectively. Throughout this paper, we use $x_R\in\Phi_R$ to denote the relays, and $x_J\in\Phi_J$ to denote the jammers.

\begin{figure}[!t]
\centering
\includegraphics[width=3.5in]{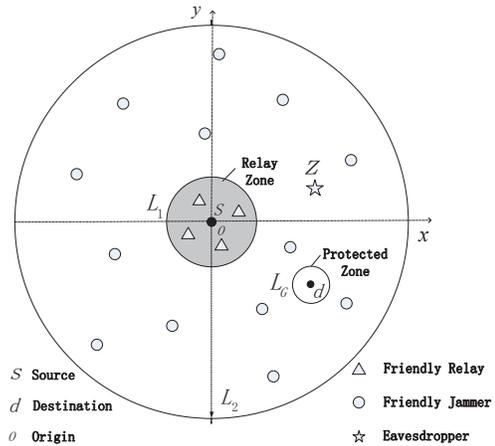}
\caption{Network model.}
\label{SCP_M_PHI}
\vspace{-5mm}
\end{figure}

\subsection{Cooperative Secrecy Transmission}
We assume all the relays work in the decode-and-forward (DF) mode with a two-phase transmission protocol. During the first phase of cooperative secrecy communication, source $s$ broadcasts confidential signal to the relays. We assume that the confidential information can be transmitted securely in this phase, which is  due to the following two reasons: 1) the source transmits with a sufficiently low power, such that the signal can not be decoded correctly by the eavesdropper outside $\mathcal{A}(o,L_1)$ due to the large-scale pass loss; 2) since $C_1$ is a sufficiently large threshold, the relays are the most trust nodes so that they will not leak the information to the eavesdropper. In this phase, we assume that the relays can always decode the received confidential information correctly, and the jammers keep silence. During the second phase, the relays forward the information to the destination. Since the destination is far away from the relays, the risk of being wiretapped is large. To safeguard the security, the jammers transmit jamming signals concurrently.
In order to protect the destination from being jammed, we set a protected zone $\mathcal{A}(d,L_G)$ \cite{ProtectedZone}.\footnote{~The destination node $d$ first broadcasts a pre-designed pilot signal with a pre-designed power. If a jammer receives the pilot signal, it is in the protect zone and it will not transmit jamming signals.} The jammers within this zone will keep silence at this phase. The dummy nodes will not take part in the confidential transmission in both the two phases.

\subsection{Signal Model}
We consider a cooperative beamforming scheme, where each relay transmits signal $s$ by pre-compensating the phase of the channel $H_{x_R,y}$. The transmitted symbol of each relay node is denoted as $s_{x_R} = \frac{\sqrt{P_s}H_{x_R,d}^*}{\|H_{x_R,d}\|}s$, where $P_s$ is the transmission power of the relay, $s$ is complex Gaussian distribution with mean 0 and the power of $s$ is unitary i.e., $E\{|s|^2\}=1$. Note that, such a cooperative beamforming scheme is a distributed one in the sense that each relay performs the cooperation with its own channel state information (CSI) instead of the global CSI. Consequently, the network overhead is greatly reduced.

Accordingly, the received signal at the destination $d$ is given by
\begin{align}
S_d(d) & = \sum_{x_R\in\Phi_R}\frac{\sqrt{P_s}H_{x_R,d}H_{x_R,d}^*}{\|H_{x_R,d}\|}d_{x_R,d}^{-\alpha/2}\cdot s \nonumber \\
& = \sum_{x_R\in\Phi_R}\sqrt{P_s}\|H_{x_R,d}\|d_{x_R,d}^{-\alpha/2}\cdot s . \label{Ty}
\end{align}
Similarly, the signal received by the eavesdropper is given by
\begin{align}
S(z) & = \sum_{x_R\in\Phi_R}\frac{\sqrt{P_s}H_{x_R,z}H_{x_R,d}^*}{\|H_{x_R,d}\|}d_{x_R,z}^{-\alpha/2}\cdot s .\label{Tz}
\end{align}

In order to enhance the security, the jammers also transmit independent Gaussian distributed interference signals to confuse the eavesdropper. Since the jamming signals from different jammers are independent, the aggregate jamming power received at the destination $d$ and the eavesdropper $z$ are given by
\begin{align}
I_d(d) = \sum_{x_J\in\mathcal{\overline{D}}}P_jh_{x_J,d}d_{x_J,d}^{-\alpha} \label{Iy}
\end{align}
and
\begin{align}
I(z) = \sum_{x_J\in\mathcal{\overline{D}}}P_jh_{x_J,z}d_{x_J,z}^{-\alpha}, \label{Iz}
\end{align}
respectively, where $P_j$ is the transmission power of each jammer, $\mathcal{\overline{D}}$ denotes the area $\Phi_J\backslash\mathcal{A}(d,L_G)$, and $h_{x_1,x_2}\thicksim \exp(1)$ is the power fading between locations $x_1$ and $x_2$. For analytical tractability, we focus on the interference-limited regime and ignore the noise at the receiver. The signal-to-interference ratio (SIR) at destination $d$ is given by
\begin{align}
SIR_d = \frac{\left|\sum_{x_R\in\Phi_R}\sqrt{P_s}\|H_{x_R,d}\|d_{x_R,d}^{-\alpha/2}\right|^2}{\sum_{x_J\in\mathcal{\overline{D}}}P_jh_{x_J,d}d_{x_J,d}^{-\alpha}}, \label{SIRy}
\end{align}
and the SIR for eavesdropper located at $z$ is given by
\begin{align}
SIR_z = \frac{\left|\sum_{x_R\in\Phi_R}\frac{\sqrt{P_s}H_{x_R,z}H_{x_R,d}^*}{\|H_{x_R,d}\|}d_{x_R,z}^{-\alpha/2}\right|^2}{\sum_{x_J\in\mathcal{\overline{D}}}P_jh_{x_J,z}d_{x_J,z}^{-\alpha}}. \label{SIRz}
\end{align}

\subsection{Performance Metric}
We use the SOP as a metric to evaluate the system secrecy performance. The SOP is defined as the probability that the SIR achieved by the single eavesdropper is larger than some threshold (i.e., the target SIR) $\beta_e$ \cite{SOPbyTXZheng}. Therefore, the SOP is given by
\begin{align}
\mathcal{P}_{so} & = \mathbb{P}\left\{SIR_z>\beta_e\right\} = 1 - \mathbb{P}\left\{\frac{T(z)}{I(z)}\leq\beta_e\right\},
\label{Pso2}
\end{align}
where $T(z) = \left|\sum_{x_R\in\Phi_R}\frac{\sqrt{P_s}H_{x_R,z}H_{x_R,d}^*}{\|H_{x_R,d}\|}d_{x_R,z}^{-\alpha/2}\right|^2$, $\frac{H_{x_R,z}H_{x_R,d}^*}{\|H_{x_R,d}\|}d_{x_R,z}^{-\alpha/2}\thicksim \mathcal{CN}(0,d_{x_R,z}^{-\alpha})$ which is independent for arbitrary $x_R\in\Phi_R$. Although $T(z)$ is conditional exponential distributed with conditional mean $P_s\sum_{x_R\in\Phi_R}d_{x_R,z}^{-\alpha}$, the mean is related to the locations of $x_R$ in $\Phi_R$. This makes it untractable to calculate the probability in \eqref{Pso2} through common ways. Also it is complicated to derive the probability distribution function (PDF) of $I(z)$.

In order to obtain a closed form expression of the SOP, we propose a DGR approach to make \eqref{Pso2} mathematically tractable, which will be discussed in Section III.

\section{DGR Approach for Calculating SOP} \label{SectionDGR}
In this paper, we aim to analyze the SOP in our considered network according to \eqref{Pso2}. However, in our scheme, it is complicated to derive the PDFs of $T(z)$ and $I(z)$.
To tackle this problem, we propose the following DGR approach, which will facilitate our calculations.

In order to describe the DGR approach, we primarily introduce the Gamma approximation. The Gamma approximation is an approach to approximate the distribution of a random variable based on the Gamma distribution \cite{GamaAppr}. By employing the Gamma approximation, the PDFs of $T(z)$ and $I(z)$ can be easily obtained. The Gamma approximated PDF of a random variable $A$ is given as
\begin{align}
G_A(x_A;\nu_A,\theta_A) & = \frac{x_A^{\nu_A - 1}e^{-\frac{x_A}{\theta_A}}}{\theta_A^{\nu_A}\Gamma(\nu_A)}, \label{Gama}
\end{align}
where $\Gamma(\nu_A)$ is the Gamma function \cite[eq(6.45)]{TableIntegral}, and the parameters $\nu_A$ and $\theta_A$ are derived from matching the first and second order moments of $A$. The $i$-th cumulants $N_A^{(i)}$ of variable $A$ is defined as
\begin{align}
N_A^{(i)} & = \frac{\mathrm{d}^{i}\mathbb{E}_A\left[e^{wa}\right]}{\mathrm{d}w^{i}}\Big|_{w = 0}. \label{Eti}
\end{align}
The mean of $A$ is thereby denoted as $\mu_A = N_A^{(1)}$, and the variance of $A$ is given by $\sigma_A^2 = N_A^{(2)} - \left(N_A^{(1)}\right)^2$. Accordingly, the approximated Gamma variable has the parameters
\begin{align}
\nu_A = \frac{\mu_A^2}{\sigma_A^2},\quad\theta_A = \frac{\sigma_A^2}{\mu_A}. \label{vt}
\end{align}

According to \eqref{Gama}-\eqref{vt}, we can obtain the approximated PDFs $G_T(x_T;\nu_T,\theta_T)$ and $G_I(x_I;\nu_I,\theta_I)$ for $T(z)$ and $I(z)$ in our network model, respectively.
Firstly, we derive the parameters $\nu_{T}$, $\theta_{I}$, $\nu_{I}$, and $\theta_{I}$ as
\begin{align}
& \nu_{T} = \frac{\lambda_RQ_z(1)}{\lambda_RQ^2_z(1)+2Q_z(2)},\quad\theta_{T} = \frac{P_s[\lambda_RQ^2_z(1)+2Q_z(2)]}{Q_z(1)}, \label{nuT}
\end{align}
\begin{align}
& \nu_{I} = \frac{\lambda_J\big(\int_{\mathcal{\overline{D}}}\frac{1}{d_{x_J,z}^{\alpha}}\mathrm{d}x_J\big)^2}{2\int_{\mathcal{\overline{D}}}\frac{1}{d_{x_J,z}^{2\alpha}}\mathrm{d}x_J},\quad
\theta_{I} = \frac{2P_j\int_{\mathcal{\overline{D}}}\frac{1}{d_{x_J,z}^{2\alpha}}\mathrm{d}x_J}{\int_{\mathcal{\overline{D}}}\frac{1}{d_{x_J,z}^{\alpha}}\mathrm{d}x_J}, \label{nuI}
\end{align}
where $Q_z(n) = \int_{\mathcal{A}(o,L_1)}d_{x_R,z}^{-n\alpha}\mathrm{d}x_R$. Due to the conditional exponential distribution of $T(z)$ and the space limitation, we only provide the derivation details of \eqref{nuI} in Appendix A.
Then the approximated PDFs of $T(z)$ and $I(z)$ are given by
\begin{align}
G_{T}(x_T;\nu_{T},\theta_{T}) & = \frac{x_T^{\nu_{T} - 1}e^{-x_T/\theta_{T}}}{\theta_{T}^{\nu_{T}}\Gamma(\nu_{T})},
\end{align}
and
\begin{align}
G_{I}(x_I;\nu_{I},\theta_{I}) & = \frac{x_I^{\nu_{I} - 1}e^{-x_I/\theta_{I}}}{\theta_{I}^{\nu_{I}}\Gamma(\nu_{I})},
\end{align}
respectively.

After obtaining the approximated PDFs $G_T(x_T;\nu_T,\theta_T)$ and $G_I(x_I;\nu_I,\theta_I)$, our objective SOP in \eqref{Pso2} is giving in the following proposition.

\begin{proposition}[The DGR Approach]
The SOP derived from the ratio of two approximated Gamma variables is given by
\begin{align}
\mathcal{P}_{so} & = \frac{q_e^{\nu_{T}}\Gamma(\nu_{T} + \nu_{I})}{\nu_{I}(q_e + 1)^{\nu_{T} + \nu_{I}}\Gamma(\nu_{T})\Gamma(\nu_{I})}\cdot F, \label{Pso3}
\end{align}
where $q_e = \frac{\beta_e\theta_{I}}{\theta_{T}}$, $F = \F\left(1,\nu_{T} + \nu_{I};\nu_{I} + 1;\frac{1}{q_e + 1}\right)$ with $\F$ being hypergeometric function \cite[Eq. 6.455.1]{TableIntegral}.
\end{proposition}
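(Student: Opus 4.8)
The plan is to treat the two Gamma approximations $G_T$ and $G_I$ as the laws of two \emph{independent} random variables $T$ and $I$ --- the independence being inherited from the fact that the relay process $\Phi_R$ and the jammer process $\Phi_J$ are independent PPPs carrying mutually independent fading --- and then to evaluate the tail probability of their ratio in closed form. Starting from $\mathcal{P}_{so} = \mathbb{P}\{T/I > \beta_e\} = \mathbb{P}\{I < T/\beta_e\}$ and conditioning on $T = x_T$, I would write
\begin{align}
\mathcal{P}_{so} = \int_0^\infty F_I\!\left(\frac{x_T}{\beta_e}\right) G_T(x_T;\nu_T,\theta_T)\,\mathrm{d}x_T,
\end{align}
where $F_I$ is the CDF of the Gamma variable $I$. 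The reason for conditioning on $T$ rather than on $I$ is that the resulting inner factor is then exactly a Gamma CDF, which has a clean lower-incomplete-gamma form and lines up with a tabulated integral.

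Next I would substitute $F_I(y) = \gamma(\nu_I, y/\theta_I)/\Gamma(\nu_I)$ together with the explicit density \eqref{Gama} for $G_T$, turning the expression into
\begin{align}
\mathcal{P}_{so} = \frac{1}{\theta_T^{\nu_T}\Gamma(\nu_T)\Gamma(\nu_I)}\int_0^\infty x_T^{\nu_T - 1}e^{-x_T/\theta_T}\,\gamma\!\left(\nu_I,\frac{x_T}{\beta_e\theta_I}\right)\mathrm{d}x_T.
\end{align}
This integral is precisely of the type $\int_0^\infty x^{\mu-1}e^{-\beta x}\gamma(\nu,\alpha x)\,\mathrm{d}x$ tabulated in \cite[Eq. 6.455.1]{TableIntegral}, under the identifications $\mu = \nu_T$, $\beta = 1/\theta_T$, $\nu = \nu_I$, and $\alpha = 1/(\beta_e\theta_I)$; the positivity of $\nu_T,\nu_I$ guarantees the convergence conditions of that formula are met.

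Applying the formula yields the closed form $\frac{\alpha^\nu\Gamma(\mu+\nu)}{\nu(\alpha+\beta)^{\mu+\nu}}\F(1,\mu+\nu;\nu+1;\alpha/(\alpha+\beta))$, and the remaining work is bookkeeping: with $q_e = \beta_e\theta_I/\theta_T$ one checks that $\alpha/(\alpha+\beta) = 1/(q_e+1)$, while $\alpha^{\nu_I}/(\alpha+\beta)^{\nu_T+\nu_I}$ collapses, after cancelling the powers of $\theta_T$, to $q_e^{\nu_T}\theta_T^{\nu_T}/(q_e+1)^{\nu_T+\nu_I}$; feeding this back reproduces \eqref{Pso3} exactly. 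I expect the main obstacle to be aligning the four parameters of the tabulated integral correctly and simplifying the hypergeometric argument without an inversion slip --- swapping the roles of $\alpha$ and $\beta$ lands one on the complementary argument $q_e/(q_e+1)$. To guard against this I would cross-check with an independent derivation that integrates the ratio density of $T/I$ directly, reducing the tail probability to an incomplete Beta function $B(1/(q_e+1);\nu_I,\nu_T)$, which via Euler's transformation gives the same $\F$ with argument $1/(q_e+1)$ and the same prefactor.
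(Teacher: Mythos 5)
Your proposal is correct and takes essentially the same route as the paper's proof: both treat $T(z)$ and $I(z)$ as independent Gamma variables, condition on one of them so that the inner factor is an incomplete-Gamma CDF, and close the resulting integral with a tabulated formula from Gradshteyn--Ryzhik 6.455, followed by the same $q_e$ bookkeeping. The only (immaterial) differences are mirror-image ones: you condition on $T$ and use the lower incomplete Gamma CDF of $I$ --- which is actually Eq.~6.455.2, not 6.455.1 as you (and the paper's label) cite --- obtaining $\mathcal{P}_{so}$ directly, whereas the paper conditions on $I$, writes the CDF of $T$ via the upper incomplete Gamma so that Eq.~6.455.1 applies, and takes the complement $1-\mathbb{P}\{T/I\leq\beta_e\}$ at the end; both yield the identical closed form.
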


\begin{proof}
According to the formulation of $\mathcal{P}_{so}$ in \eqref{Pso2}
\begin{align}
&\quad~~\mathbb{P}\left\{\frac{T(z)}{I(z)}\leq\beta_e\right\} \nonumber \\
& = \mathbb{P}\left\{T(z)\leq\beta_e I(z)\right\} \nonumber \\
& \overset{(a)}{=} \mathbb{E}_{I(z)}\left[\int_0^{\beta_e I}G_T(x_T;\nu_T,\theta_T)\mathrm{d}x_T\right] \nonumber \\
& \overset{(b)}{=} \int_0^{\infty}\left(1 - \frac{\Gamma\left(\nu_T,\frac{\beta_e I}{\theta_T}\right)}{\Gamma(\nu_T)}\right)G_I(x_I;\nu_I,\theta_I)\mathrm{d}x_I \nonumber \\
& \overset{(c)}{=} 1 - \frac{q_e^{\nu_T}\Gamma(\nu_T + \nu_I)}{\nu_I(q_e + 1)^{\nu_T + \nu_I}\Gamma(\nu_T)\Gamma(\nu_I)}\cdot F, \nonumber
\end{align}
where $(a)$ follows from the total probability formulation, $(b)$ follows from the definition of the incomplete Gamma function \cite[Eq. 6.45]{TableIntegral} and the approximated PDF of $I(z)$. After some integral calculations, $(c)$ follows from applying \cite[Eq. 6.455.1]{TableIntegral}.
Consequently,
\begin{align}
\mathcal{P}_{so} & = 1 - \mathbb{P}\left\{\frac{T(z)}{I(z)}\leq\beta_e\right\} \nonumber \\
& = \frac{q_e^{\nu_{T}}\Gamma(\nu_{T} + \nu_{I})}{\nu_{I}(q_e + 1)^{\nu_{T} + \nu_{I}}\Gamma(\nu_{T})\Gamma(\nu_{I})}\cdot F. \nonumber
\end{align}
\end{proof}

\begin{figure}[!t]
\centering
\includegraphics[width=3in]{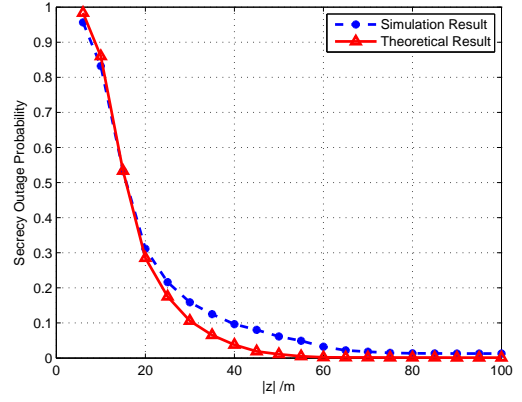}
\caption{SOP vs. different distances $d_E$ of the eavesdropper. The system parameters are $\beta_e=0$ dB, $\alpha=4$, $L_1=6$ m, $L_2=100$ m, $L_G=5$ m, $C_1=0.8$, $C_2=0.79$, $\lambda=0.2 ~/\mathrm{m}^2$, $P_s=10$ dBm, and $P_j=1$ dBm.}
\label{SOP1}
\end{figure}
Fig.~\ref{SOP1} depicts the theoretical result in Proposition 1 and the simulation result of the SOP, where $d_E$ denotes the distance between the origin and the eavesdropper. We can see from Fig.~\ref{SOP1} that the proposed DGR approach has high accuracy.


\section{Numerical Results and Discussions}
In this section, representative numerical results are presented to illustrate the SOP performance in our network model. Considering the accuracy of the DGR approach and also for simplicity, we only present the numerical results based on Proposition 1. As the legitimate nodes in the network are classified into relays and jammers according to their social trust degrees of the source, the density $\lambda_R$ of the relays and the density $\lambda_J$ of the jammers are determined by $(1 - C_1)\lambda$ and $C_q\lambda$, respectively, where $C_q$ denotes $(C_1 - C_2)$. We mainly focus on the impacts of $C_1$ and $C_q$ on the SOP. The other system parameters as set as follows: the fading power exponent is $\alpha=4$, the border of $\mathcal{A}(0,L_1)$ is $L_1=6$ m, the outer border of $\mathcal{D}(L_1,L_2)$ is $L_2=100$ m, the radius of the protected zone is $L_G=5$ m, the density of the legitimate nodes is $\lambda=0.2 ~/\mathrm{m}^2$, the transmission power of the relay is $P_s=10$ dBm, and the transmission power of the jammer is $P_j=1$ dBm.

\begin{figure}[!tp]
\centering
\includegraphics[width=3in]{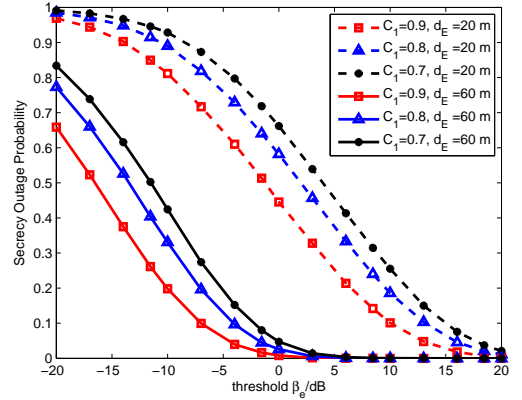}
\caption{SOP of single eavesdropper vs. $\beta_e$ for various social trust degree $C_1$ and distances $d_E$ of the eavesdropper.}
\label{sop_dfnZ_lamR_rE}
\end{figure}

Fig.~\ref{sop_dfnZ_lamR_rE} plots the SOP versus $\beta_e$ for various social trust degrees $C_1$ of the source and distances $d_E$ of the eavesdropper.
Comparing the curves with the same $d_E$, we see that as $C_1$ increases, the SOP decreases. This is because a larger $C_1$ is equivalent to a smaller $\lambda_R$, which results in less relays producing lower $SIR_E$. Therefore, there is a higher probability for performing perfect secrecy, which leads to a lower SOP. We also see that the SOP decreases with increasing $d_E$ by the comparison among the curves with the same $C_1$. This is due to the fact that the secrecy outage occurs more frequently when the distance between the source and the eavesdropper decreases. Since $C_1$ represents the trust degree of the source, we know that the most private message should be transmitted to the person with sufficiently high trust degree in order to realize perfect secrecy.

\begin{figure}[!tp]
\centering
\includegraphics[width=3in]{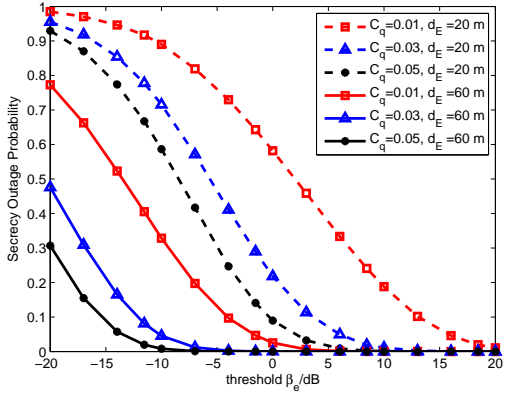}
\caption{SOP of single eavesdropper vs. $\beta_e$ for various $C_q$ and $d_E$.}
\label{sop_dfnZ_lmJ_rE}
\end{figure}

Fig.~\ref{sop_dfnZ_lmJ_rE} compares the SOP versus $\beta_e$ for various $C_q$ and $d_E$, where $C_1$ is set as 0.8. By comparing the curves with the same $d_E$, we see that as $C_q$ increases, the SOP decreases. This is because a larger $C_q$ is equivalent to a larger $\lambda_J$, which results in more jammers producing lower $SIR_E$. We also observe that the SOP dramatically decreases with increasing $d_E$. As $C_q$ derived from $(C_1-C_2)$ determines the density of the jammers, smaller $C_2$ will lead to more jammers offering intentional interference to improve the secrecy performance. From this we know that a diminishing social trust degree will disrupt the eavesdropper more efficiently.

\section{Conclusion}
In this paper, we proposed a cooperative relay and jamming scheme based on social trust degree to secure communications. In order to facilitate the analysis of the SOP, we proposed a DGR approach under a stochastic geometry framework. The simulation results have high accuracy and show that the social ties have dramatic influences on the system's secrecy performance. The proposed scheme has considered a practical scene in which the social trust degree is employed to reflect the users' willingness of cooperation. For further researches, we will study the COP and the secrecy throughput in the scheme, and consider the social trust degree further practically.

\appendices
\section{Derivation Details of \eqref{nuI}}
According to \eqref{Eti} and the definition of $I(z)$ in \eqref{Iz}, the $i$-th cumulants of $I(z)$ is given by
\begin{align}
N_I^{(i)} & = \frac{\mathrm{d}^i\mathbb{E}_{\Phi_J,h_{x_J,z}}\left[e^{w\sum_{x_J\in\mathcal{\overline{D}}}P_jh_{x_J,z}d_{x_J,z}^{-\alpha}}\right]}{\mathrm{d}w^i}\Big|_{w = 0} .\label{EIy}
\end{align}
First, we calculate $\mathbb{E}_{\Phi_J,h_{x_J,z}}\left[e^{w\sum_{x_J\in\mathcal{\overline{D}}}P_jh_{x_J,z}d_{x_J,z}^{-\alpha}}\right]$ as
\begin{align}
&\quad \mathbb{E}_{\Phi_J,h_{x_J,z}}\left[e^{w\sum_{x_J\in\mathcal{\overline{D}}}P_jh_{x_J,z}d_{x_J,z}^{-\alpha}}\right] \nonumber \\
& = \mathbb{E}_{\Phi_J,h_{x_J,z}}\bigg[\prod_{x_J\in\mathcal{\overline{D}}}e^{wP_jh_{x_J,z}d_{x_J,z}^{-\alpha}}\bigg] \nonumber \\
& \overset{(d)}{=} \mathbb{E}_{\Phi_J}\bigg[\prod_{x_J\in\mathcal{\overline{D}}}\mathbb{E}_{h_{x_J,z}}\left[e^{wP_jh_{x_J,z}d_{x_J,z}^{-\alpha}}\right]\bigg] \nonumber \\
& \overset{(e)}{=} \mathbb{E}_{\Phi_J}\bigg[\prod_{x_J\in\mathcal{\overline{D}}}\frac{1}{1 - wP_jd_{x_J,z}^{-\alpha}}\bigg] \nonumber \\
& \overset{(f)}{=} \exp\bigg[-\lambda_J\int_{\mathcal{\overline{D}}}\left(1 - \frac{1}{1 - wP_jd_{x_J,z}^{-\alpha}}\right)\mathrm{d}x_J\bigg] \nonumber \\
& = \exp\left(\lambda_J\int_{\mathcal{\overline{D}}}\frac{wP_j}{d_{x_J,z}^{\alpha} - wP_j}\mathrm{d}x_J\right), \label{AppB1}
\end{align}
where $(d)$ follows since $h_{x_J,z}$ is independent of $\Phi_J$. $(e)$ follows from $h_{x_J,z} \thicksim \exp(1)$, and by applying the probability generating functional (PGFL) of the PPP we can obtain $(f)$. Consequently, substituting \eqref{AppB1} into \eqref{EIy}, $N_I^{(1)}$ is given by
\begin{align}
&\quad N_I^{(1)} = \frac{\mathrm{d}\Big(\exp\big(\lambda_J\int_{\mathcal{\overline{D}}}\frac{wP_j}{d_{x_J,z}^{\alpha} - wP_j}\mathrm{d}x_J\big)\Big)}{\mathrm{d}w}\Big|_{w = 0} \nonumber \\
& = \exp\left(\lambda_JG_1\right)\cdot\lambda_JP_jG_2(2)\big|_{w = 0} \nonumber \\
& = \lambda_J\int_{\mathcal{\overline{D}}}\frac{P_j}{d_{x_J,z}^{\alpha}}\mathrm{d}x_J, \label{AppB2}
\end{align}
where $G_1 = \int_{\mathcal{\overline{D}}}\frac{wP_j}{d_{x_J,z}^{\alpha} - wP_j}\mathrm{d}x_J$, $G_2(n) = \int_{\mathcal{\overline{D}}}\frac{d_{x_J,z}^{\alpha}}{\left(d_{x_J,z}^{\alpha} - wP_j\right)^n}\mathrm{d}x_J$. Let $i = 2$, $N_I^{(2)}$ is given as
\begin{align}
&\quad N_I^{(2)} = \frac{\mathrm{d}^2\Big(\exp\big(\lambda_J\int_{\mathcal{\overline{D}}}\frac{wP_j}{d_{x_J,z}^{\alpha} - wP_j}\mathrm{d}x_J\big)\Big)}{\mathrm{d}w^2}\Big|_{w = 0} \nonumber \\
& \overset{(g)}{=} \lambda_JP_j^2\exp\left(\lambda_JG_1\right)\cdot\left[\lambda_JG^2_2(2) + 2G_2(3)\right]\big|_{w = 0} \nonumber \\
& = \lambda_J^2\bigg(\int_{\mathcal{\overline{D}}}\frac{P_j}{d_{x_J,z}^{\alpha}}\mathrm{d}x_J\bigg)^2 + 2\lambda_J\int_{\mathcal{\overline{D}}}\frac{P_j^2}{d_{x_J,z}^{2\alpha}}\mathrm{d}x_J, \label{AppB3}
\end{align}
where $(g)$ is the second-order differential results. As a result, by substituting \eqref{AppB2} and \eqref{AppB3}, $\sigma_I^2$ is given by
\begin{align}
& \quad \sigma_I^2 = N_I^{(2)} - \left(N_I^{(1)}\right)^2
= 2\lambda_J\int_{\mathcal{\overline{D}}}\frac{P_j^2}{d_{x_J,z}^{2\alpha}}\mathrm{d}x_J. \label{AppB4}
\end{align}

According to \eqref{vt}, $\nu_{I}$ and $\theta_{I}$ are obtained as
\begin{align}
\nu_{I} = \frac{\lambda_J\big(\int_{\mathcal{\overline{D}}}\frac{1}{d_{x_J,z}^{\alpha}}\mathrm{d}x_J\big)^2}{2\int_{\mathcal{\overline{D}}}\frac{1}{d_{x_J,z}^{2\alpha}}\mathrm{d}x_J}
\end{align}
and
\begin{align}
\theta_{I} = \frac{2P_j\int_{\mathcal{\overline{D}}}\frac{1}{d_{x_J,z}^{2\alpha}}\mathrm{d}x_J}{\int_{\mathcal{\overline{D}}}\frac{1}{d_{x_J,z}^{\alpha}}\mathrm{d}x_J},
\end{align}
respectively.

\section*{Acknowledgment}
This work was partially supported by the National Natural Science Foundation of China under Grants 61671364 and 61701390, the Author of National Excellent Doctoral Dissertation of China under Grant 201340, the Young Talent Support Fund of Science and Technology of Shaanxi Province under Grant 2015KJXX-01, and the China Postdoctoral Science Foundation under Grant 2017M613140.

\end{document}